\def\BibTeX{{\rm B\kern-.05em{\sc i\kern-.025em b}\kern-.08em
    T\kern-.1667em\lower.7ex\hbox{E}\kern-.125emX}}
\begin{document}

\newtheorem{definition}{Definition}
\newtheorem{theorem}{Theorem}
\title{Efficient Private Machine Learning by Differentiable Random Transformations}

\author{
    \IEEEauthorblockN{1\textsuperscript{st} Fei Zheng}
    \IEEEauthorblockA{
        \textit{Dept. Computer Science} \\
        \textit{Zhejiang University}\\
                Hangzhou, China \\
                zfscgy2@zju.edu.cn
    }
}

\maketitle

\begin{abstract}
With the increasing demands for privacy protection, many privacy-preserving machine learning systems were proposed in recent years. However, most of them cannot be put into production due to their slow training and inference speed caused by the heavy cost of homomorphic encryption and secure multiparty computation(MPC) methods. To circumvent this, I proposed a privacy definition which is suitable for large amount of data in machine learning tasks. Based on that, I showed that random transformations like linear transformation and random permutation can well protect privacy. Merging random transformations and arithmetic sharing together, I designed a framework for private machine learning with high efficiency and low computation cost.
\end{abstract}

\begin{IEEEkeywords}
Secure Multiparty Computation, Machine Learning, Neural Network, Privacy Preserving
\end{IEEEkeywords}

\section{Introduction}
Machine learning has been widely used in many real-life scenarios in recent years. In most cases, training machine learning models requires a large amount of data. For example, training a neural network to determine whether two pictures belong to the same person may needs at least tens of thousands photos, and training a model to predict the possibility of credit default of someone needs tens of thousands credit records of different people. Those data are always distributed among different facilities. On the one hand, many governments have published the laws against abuses of data in order to protect people's privacy and prevent those data from being stolen for evil uses. On the other hand, the companies do not want their data being exposed to others. When they want to share data with others, it's always difficult to ensure that the other party will not store their data secretly for usages violates the contract. So to make different data holders to share their data and hence to build better machine learning models, privacy-preserving machine learning technologies must be adopted.
To achieve this, researchers have worked out many solutions, which can be generalized to two major methods. 
\begin{itemize}
    \item Homomorphic Encryptions. The homomorphic encryption methods allow arithmetic operations on the ciphertext. For example, the Paillier cryptosystem\cite{paillier1999} supports additions on ciphertext, and the Gentry cryptosystem supports both addition and multiplication, which is the first fully homomorphic encryption scheme. The security is based on the key length. But up until now, those methods are way too costly for most applications.
    \item Secure Multiparty Computation(MPC). MPC methods provide ways to calculate a function while keep the inputs private. The most famous scheme is Yao's garbled circuit(For details, refer to \cite{lindell2009gcproof}). There are also MPC methods based on secret sharing, such as SecureML\cite{mohassel2017secureml}.
\end{itemize}
Besides, there are also some other techniques to protect data privacy, like differential privacy and secure aggregation. Differential privacy methods protect privacy by adding noise to the data or some intermediate values, while secure aggregation only applies to federated learning scenarios.

\subsection{My contributions}
Existing methods mostly focus on designing a method or protocol that will leak no information about the raw data. Like using homomorphic encryption, no attackers can gain any information about the data in polynomial time w.r.t. security parameter. However, this definition is not suitable for the data in machine learning setting. What is necessary is that the data cannot be reused. So I proposed a metric to quantify the information leakage during computation, and a practical method to leverage between privacy preserving and efficiency.
In this paper, I made the following contributions:
\begin{itemize}
    \item A privacy definition that focus on the possibility on recovering raw data.
    \item Proved random transformations, i.e. random linear transformation and random permutation can well protect privacy.
    \item Designed a private machine learning framework which combines random transformation and arithmetic sharing together and achieves very high efficiency in machine learning tasks.
\end{itemize}
\section{Related Work}
\subsection{Privacy Preserving}
Privacy preserving during the data analysis process has long been concerned. \cite{Narayanan2006netflixleak} shows that even a few record exposed, the attacker may be able to locate a specific person in the database. Various strategies were used in order to maintain privacy. The k-Anonymity methods is to perturb or hide some of the attributes which can be used to identify individual records, so called the 'quasi identifiers'. Beyond it, there are l-diversity aiming for adding diversity in a group of 'close' records, and t-Closeness aiming for make the distribution similar for different group of records. However, as the era of machine learning comes, the amount of data become enormous and the structure of data is fairly complicated, which is kind of incompatible with those previous privacy notions.
\subsection{Solutions based on homomorphic encryption}
In order to protect privacy, Cryptonets\cite{bachrach2016cryptonets} first applied the fully homomorphic encryption to deep neural network. All computations are done on the encrypted data. The authors tested this model on the MNIST dataset, and achieved 99\% accuracy with a throughput about 59000 prediction per hour and a latency for about 250 seconds. Gazelle\cite{juvekar2018gazelle} avoided expensive fully homomorphic encryption and used packed additive homomorphic encryption to improve efficiency, and used garbled circuit to calculate non-linear activations. It reduces single image classification latency to around 30 microseconds. GELU-Net\cite{zhangqiao2019gelunet} let client to calculate the activations while server calculate the linear transformation using additive homomorphic encryption. 
\subsection{Multiparty Computing Methods}
Yao's Garbled Circuit first proposed a method for Two-Party secure computation. GMW protocol\cite{GMW1987} extended its work to multiparty conditions. Aside from boolean circuits, BGW protocol\cite{BGW1988} works on the arithmetic circuits based on Shamir's secret sharing scheme\cite{shamir1979share}. And Beaver\cite{beaver1991} used precomputed triples to accelerate online multiplication. ABY\cite{demmler2015aby} mixed arithmetic, boolean and Yao's sharing together provided a efficient two-party computation protocol that supports various kinds of computations covered common machine learning functions. while ABY3\cite{mohassel2018aby3} hugely improved its efficiency under 3PC setting. SecureML\cite{mohassel2017secureml} applied arithmetic sharing and garbled circuit to linear regression, logistic regression and neural network.
\subsection{Differential Privacy}
Differential privacy was proposed by \cite{dwork2014dp}. It protects privacy by limiting the change of function when one record in the data changed. Differential privacy is always achieved by adding noises somewhere in the data analysis process. For example, \cite{abadi2016deepdp} adds the noise in the gradients of training. PATE\cite{nicolas2017pate} applied differential privacy on the lable-generation phase of teacher models. And the ESA architecture\cite{Bittau2017ESA} uses local differential privacy to ensure the worst-case privacy when all other parties are colluding together.
\newline

Those methods all have their advantages: Homomorphic encryption has perfect privacy with big enough security parameter; Multiparty computing is faster, it's absolute secure in the information theoretic sense as long as the participants keep semi-honesty; Differential privacy provided a strong tool to evaluate privacy, and is very simple to implement, even on the client device. However, the cost for homomorphic encryption and multiparty computation is still too high to be widely used. And the differential privacy certainly affects the model performance in machine learning since noises are added. And it's focused on the effect of one record, but not the actual sensitive data.
\section{Privacy Definition}
\subsection{Reconstructive privacy}
In most machine learning scenarios, the data used is a table. Every row is a training sample and every column is a feature. The metadatas, i.e. the ID of each row and the attribute name of each column are very easy to be hide. The only information exposed is the entries of the data table. In this case, the raw data are what matters. So I defined reconstructive privacy as follows:

\begin{definition}[$\epsilon$-reconstructive privacy]
    A data transformation $T$ is said to have $\epsilon$-reconstructive privacy under auxiliary information $a$ if an adversary $\mathcal A$ with input $T(x)$ and auxiliary information $a$ has a chance at most $\epsilon$ to recover the raw data $x$.
    In other words, for any function $A$, $E_x(p[A(T(x); a) = x]) < \epsilon$. If there are no auxiliary informations, 
\end{definition}

For example, consider shuffle on a array of length 5. Without any other information, the adversary can only guess randomly. So he has a $\dfrac1{5!}$ to get the correct raw data. That is, the transformation shuffle has a $\dfrac1{5!}$-reconstructive privacy with no auxiliary information.

\begin{definition}[$\epsilon,\delta$-reconstructive privacy]
    A data transformation $T$ is said to have $\epsilon$-reconstructive privacy under auxiliary information $a$ if an adversary $\mathcal A$ with input $T(x)$ and auxiliary information $a$ has chance $\epsilon$ to recover the raw data $x$ with error $\delta$. The error definition can be specifically choosed according to scenario.
    In other words, for any function $A$, $E_x(p[|A(T(x); a) - x| < \delta]) < \epsilon$.
\end{definition}

For example, consider adding noises $e \sim \mathcal N(0,1)$ to a value $x$. The adversary get the value $x + e$, with auxiliary input that the variation of noise is 1. so he can guess the real value is in $[x + e -3, x + e + 3]$ with a $\Psi(3) - \Psi(-3) = 0.997$ confidence. That is, the transformation $T(x) = x + e$ has a $0.997,3$-reconstructive privacy.

\subsection{Common Transformations}
\textbf{Linear Transformation: }
Let raw data be a vector  $\mathbf x$ of length $n$. A linear transformation turns $x$ into $Ax$ where $A\in \mathbb R^{m\times n}$ is a matrix. Calculating the $\epsilon$ and $\delta$ in linear transformation's reconstructive privacy is not trivial. In the following theorem, I assume that the raw data $x$ and the elements in matrix $A$ are all random variables drawn from a standard normal distribution.

\begin{theorem}[Linear transformation's reconstructive privacy]\label{thm:linear}
    Let raw data $\mathbf x \in \mathbb R^n$ be a vector with each element drawn from the standard normal distribution independently, the same as matrix $A \in \mathbb R^{m\times n}$. And let the auxiliary information for adversary is the $\mathbf x$ and $A$ both are drawn from standard normal distribution. The linear transformation $T: x\rightarrow Ax$ has a $\epsilon, \delta$-reconstructive privacy where $\epsilon < p(|\mathbf y| <\delta)$ with $y\in \mathbb R^{n-1}, y_1,...,y_{n-1} \sim \mathcal N(0, 1)$ and $p$ is the density function. In other words, it's like the adversary has no information in $n - 1$ dimensions of the raw data $x$.
\end{theorem}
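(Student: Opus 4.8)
The plan is to treat the adversary as an optimal Bayesian estimator and to upper bound the success metric $E_x(p[\,|A(T(\mathbf x);a)-\mathbf x|<\delta\,])$ by the posterior mass that the best possible estimate can capture inside a $\delta$-ball. The crucial first observation, which I would establish before anything else, is that because $A$ is unknown to the adversary and has i.i.d.\ standard normal entries, each output coordinate is $y_j=\langle A_j,\mathbf x\rangle$, so conditionally on $\mathbf x$ we have $y_j\sim\mathcal N(0,\|\mathbf x\|^2)$ independently across $j$. Hence $y\mid\mathbf x\sim\mathcal N(0,\|\mathbf x\|^2 I_m)$, and the likelihood $p(y\mid\mathbf x)$ depends on $\mathbf x$ only through the norm $\|\mathbf x\|$. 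This computation already reveals that the conclusion is essentially independent of $m$: no matter how many linear combinations are released, an adversary who does not know $A$ can only learn about $\|\mathbf x\|$.

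Since the prior on $\mathbf x$ is rotationally invariant and the likelihood depends only on $\|\mathbf x\|$, the posterior factorizes: conditioned on the observation the direction $\mathbf x/\|\mathbf x\|$ stays uniform on the sphere $S^{n-1}$ and is independent of the data, while only the radial density gets updated. This is the formal content of the phrase ``no information in $n-1$ dimensions.'' Next I would bound the success probability by granting the adversary the true radius $r=\|\mathbf x\|$ for free; because supplying extra information can only raise the attainable success probability, this yields an upper bound on $\epsilon$. Under this concession $\mathbf x=rU$ with $U$ uniform on the sphere, and the optimal estimator can capture at most the normalized surface measure of the spherical cap $\{u:\|ru-\hat x\|<\delta\}$.

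To convert that cap measure into the claimed $(n-1)$-dimensional Gaussian ball probability, I would align $\hat x$ with a coordinate axis, write $\mathbf x'=(x_2,\dots,x_n)$ for the projection onto the orthogonal complement, and use $\|\mathbf x-\hat x\|\ge\|\mathbf x'\|$ to get $\Pr[\|\mathbf x-\hat x\|<\delta]\le\Pr[\|\mathbf x'\|<\delta]$. The law of $\mathbf x'$ is the projection of the uniform measure on $S^{n-1}(r)$ onto an $(n-1)$-dimensional subspace; for the typical radius $r\approx\sqrt n$ of a standard Gaussian, a classical projection lemma (Poincar\'e / Maxwell--Borel) says this projection is close to a standard normal $Z\in\mathbb R^{n-1}$, producing the target bound $\epsilon<\Pr[\|Z\|<\delta]=p(|\mathbf y|<\delta)$.

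The main obstacle I anticipate is this last step: turning the cap-versus-Gaussian comparison into a genuine inequality valid for every radius rather than only in the $r=\sqrt n$ limit, and then integrating it against the random, partially revealed posterior radial density without losing the factor. A secondary technical point is justifying that handing the adversary the exact norm is the right relaxation, i.e.\ that the slack it introduces is harmless for an upper bound, and checking the small-$n$ edge behavior where the sphere is far from its flat, Gaussian-like regime.
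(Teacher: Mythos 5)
Your first two steps are sound and take a genuinely different route from the paper: you marginalize out the unknown $A$ to get $y\mid\mathbf x\sim\mathcal N(0,\|\mathbf x\|^2 I_m)$ and conclude that the posterior direction $\mathbf x/\|\mathbf x\|$ remains uniform on $S^{n-1}$ and independent of the observation, whereas the paper instead conditions on $A$ and decomposes the $\mathbf x$-integral over the fibers $\{A\mathbf x=y\}$, using that a standard Gaussian restricted to an affine subspace is again a (shifted) standard Gaussian on that subspace and bounding the ratio of integrals by the maximum ratio of integrands. But your final step is a real gap, and you have correctly located it. Granting the adversary the exact radius $r=\|\mathbf x\|$ and then comparing the spherical cap measure to the $(n-1)$-dimensional Gaussian ball via Poincar\'e/Maxwell--Borel cannot be turned into an inequality valid for every $r$: the projection lemma is only an asymptotic statement at $r\approx\sqrt n$, and for small radii the per-radius claim fails outright --- if $r<\delta$ an adversary who knows $r$ can guess $\hat x=0$ and succeed with conditional probability $1$, which no fixed $(n-1)$-dimensional Gaussian ball probability dominates pointwise. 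So the radius-conditional bound you are trying to prove is false, and the argument cannot be closed in the form proposed.

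The fix is to not condition on the radius at all. For any estimate $\hat x=\hat x(y)$, let $V(y)=\hat x(y)^{\perp}$ and use $\|\mathbf x-\hat x\|\ge\|\Pi_{V(y)}(\mathbf x-\hat x)\|=\|\Pi_{V(y)}\mathbf x\|$, so the success probability is at most $E_y\Pr[\|\Pi_{V(y)}\mathbf x\|<\delta\mid y]$. Because the direction of $\mathbf x$ is uniform and independent of the pair $(\|\mathbf x\|,y)$, the conditional law of $\|\Pi_{V(y)}\mathbf x\|$ given $\|\mathbf x\|=r$ does not depend on which $(n-1)$-dimensional subspace $V(y)$ is; taking the full expectation over $y$ and $r$ therefore restores the exact unconditional law $\|\Pi_{V}\mathbf x\|\sim\chi_{n-1}$, giving $\epsilon\le\Pr[\|\mathbf z\|<\delta]$ with $\mathbf z\sim\mathcal N(0,I_{n-1})$ and no asymptotics. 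With that replacement your rotational-symmetry route becomes a complete proof, and in one respect a sharper one than the paper's: it shows the bound is independent of $m$, whereas the paper's fiber decomposition naturally yields a subspace of dimension $n-m$ and only matches the stated $n-1$ when $m=1$.
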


\begin{proof}
    First, from the reconstructive privacy's definition, we have 
    \begin{equation}
        \label{def:raw}
        E_\mathbf x(p[|C(A\mathbf x) - \mathbf x| < \delta]) < \epsilon
    \end{equation}
     Here I use $C(·)$ to denote adveray's function to avoid the confusion with transformation matrix $A$. Since $A$ is also a random variable, we can change \eqref{def:raw} into $E_{\mathbf x, A}I(|C(A\mathbf x) - \mathbf x| <\delta)$, where $I$ is a indicator function when the condition satisfies is 1, otherwise is 0. 
    \begin{equation}
        \epsilon =\dfrac{ \int_{A, \mathbf x}p(T=A)p(X=\mathbf x)I(|C(A\mathbf x)-\mathbf x|<\delta)d\mathbf xdA}{ \int_{A,\mathbf x}p(T=A)p(X=\mathbf x)d\mathbf xdA}
    \end{equation}
    
    Where I uses $d\mathbf x$ to denote $dx_1dx_2...dx_n$, and uses $dA$ to denote $dA_{1,1}dA_{1,2}...dA_{m, n}$. In order to eliminate the annoying term $C(Ax)$, we have to do a rotation on $\mathbf x$'s coordinates and extract $y = A\mathbf x$. That produces:
    \begin{equation}
    \begin{split}
        \dfrac{ \int_y \int_{A} \int_{\mathbf x, A\mathbf x=y}p(T=A)p(X=\mathbf x)I(|C(y)-\mathbf x|<\delta)dvdAdy}{ \int_y \int_{A} \int_{\mathbf x, A\mathbf x= y}p(T=A)p(X=\mathbf x)dvdAdy}
    \end{split}
    \end{equation}
    Then how to find the upper bound of $\epsilon$? The intuition comes from the simple inequality $\dfrac {\int f(x) dx}{\int g(x) dx} \le \max \dfrac{f(x)}{g(x)}$ for $f(x), g(x) > 0$. Considering the hyperplane $A\mathbf x = y$, the formula 
    $\dfrac {\int_{A\mathbf x = y}p(X=x)I(|C(y) - x| < \delta)dv}{\int_{A\mathbf x = y}p(X=x)dv}$ is actually the probability of $x$ lies in the ball $\mathcal B(y, \delta)$ on the hyperplane $A\mathbf x = y$. Since the marginal distribution on that hyperplane is still a standard normal distribution, Which can be expressed by $p(z) = \dfrac{1}{\sqrt{(2\pi)^{n-1}}}e^{-|\mathbf z|^2}$. So the upperbound of $\epsilon$ is lower than the probability that the random vector $\mathbf x \in \mathbb R^{n-1}$ has a length shorter than $\delta$.
\end{proof}

The above theorem shows the linear transformation will reveal no more information than one dimension of the raw data. Actually, since we used very strong conditions to proof the upperbound of $\epsilon$, the information leakage can be far less than theory, that is, the adversary can only get a little information on one dimension of the raw data. Notice that one dimension does not mean one element in the vector.

\textbf{Random permutation: } Random permutation is a basic method to hide data. Since a random permutation on a sequence of length $n$ can produce $n!$ possibile outcoms, we can calculate the reconstructive privacy for it:
\begin{theorem}[Random permutation's reconstructive privacy]
    Random permutation on an vector of length $n$ has a $\dfrac{1}{n!}$-reconstructive privacy.
\end{theorem}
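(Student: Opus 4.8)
The plan is to follow the counting principle behind the length-5 example, but to make the ``guess uniformly'' intuition rigorous through a combinatorial in-degree count rather than leaving it at the heuristic level. First I would fix the model: the transformation draws a permutation $\pi$ uniformly from the symmetric group $S_n$, so each of the $n!$ permutations is applied with probability $1/n!$, and $T(\mathbf x) = \pi(\mathbf x)$ denotes the vector whose coordinates have been rearranged by $\pi$. The adversary observes $\pi(\mathbf x)$ and applies a fixed deterministic function $C$, so by the definition of reconstructive privacy I must bound $E_{\mathbf x}\big(p[\,C(\pi(\mathbf x)) = \mathbf x\,]\big)$, where the inner probability is over the random choice of $\pi$.

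Next I would restrict attention to a vector $\mathbf x$ whose entries are all distinct, so that the map $\pi \mapsto \pi(\mathbf x)$ is a bijection from $S_n$ onto the orbit $O(\mathbf x)$ consisting of the $n!$ distinct rearrangements of $\mathbf x$. Under this bijection the inner success probability rewrites cleanly:
\begin{equation}
\begin{split}
    p[\,C(\pi(\mathbf x)) = \mathbf x\,] &= \frac{1}{n!}\sum_{\pi \in S_n} I[\,C(\pi(\mathbf x)) = \mathbf x\,] \\
    &= \frac{1}{n!}\,\big|\{\,\mathbf y \in O(\mathbf x) : C(\mathbf y) = \mathbf x\,\}\big|,
\end{split}
\end{equation}
so the success probability for a fixed $\mathbf x$ is exactly $d_C(\mathbf x)/n!$, where $d_C(\mathbf x)$ is the in-degree of $\mathbf x$ under $C$ restricted to the orbit. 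This rewriting also makes explicit why the adversary cannot simply ``hardcode'' the answer: $C$ is a single function shared across the whole orbit, not tailored to one unknown $\mathbf x$.

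The key step is then a conservation count over an entire orbit. Since every element $\mathbf y \in O(\mathbf x)$ is mapped by $C$ to exactly one target, summing in-degrees over all $n!$ elements of the orbit gives $\sum_{\mathbf x' \in O(\mathbf x)} d_C(\mathbf x') = |O(\mathbf x)| = n!$, so the average in-degree over the orbit is $1$. Averaging the success probability over a prior that is uniform on the orbit—the natural ``no auxiliary information'' assumption, since an exchangeable prior makes every rearrangement equiprobable—therefore yields $E_{\mathbf x}\big(p[\,C(\pi(\mathbf x)) = \mathbf x\,]\big) = \tfrac{1}{n!}\cdot\tfrac{1}{n!}\sum_{\mathbf x'} d_C(\mathbf x') = \tfrac{1}{n!}$, independent of the adversary's choice of $C$. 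Equivalently, one can argue by symmetry that the posterior of $\mathbf x$ given $\pi(\mathbf x)$ is uniform over its $n!$ preimages, so no decision rule beats $1/n!$.

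I expect the main obstacle to be the repeated-entries case: if $\mathbf x$ has equal coordinates, its orbit contains fewer than $n!$ distinct rearrangements, the map $\pi \mapsto \pi(\mathbf x)$ is no longer injective, and the adversary's best guess can succeed with probability strictly larger than $1/n!$, so the clean bound is really the distinct-entry guarantee. I would resolve this either by stating the theorem under the standing assumption that the entries are distinct, or by replacing $n!$ with the number of distinct rearrangements of $\mathbf x$ and noting that $1/n!$ is attained exactly in the generic distinct case.
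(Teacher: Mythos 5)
The paper gives no proof of this theorem at all: the statement is asserted bare, and the only supporting reasoning is the informal length-5 shuffle example earlier in Section III ("the adversary can only guess randomly, so he has a $\frac{1}{5!}$ chance"). Your proposal therefore cannot be compared to a paper proof step by step; what it does is supply the rigorous argument the paper omits, and it does so correctly. The orbit bijection $\pi \mapsto \pi(\mathbf x)$, the rewriting of the success probability as $d_C(\mathbf x)/n!$, and the conservation count $\sum_{\mathbf x' \in O(\mathbf x)} d_C(\mathbf x') \le n!$ together give exactly the "no decision rule beats uniform guessing" conclusion that the paper takes for granted. One small correction: that sum equals $n!$ only when $C$ maps the orbit into itself; in general it is at most $n!$ because some $\mathbf y$ may be sent outside the orbit, but the inequality points the right way and only strengthens the bound. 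You also correctly identify the repeated-entries issue, which the paper silently ignores: if $\mathbf x$ has equal coordinates the orbit has fewer than $n!$ elements and the guarantee degrades to one over the number of distinct rearrangements, so the theorem as stated really needs a distinct-entries hypothesis (or a genericity assumption on the data distribution). A second caveat worth noting against the paper's own definition: the definition demands a strict inequality $E_{\mathbf x}(p[\cdot]) < \epsilon$, whereas the optimal adversary attains exactly $1/n!$, so strictly speaking the theorem holds only with $\le$ or with any $\epsilon > 1/n!$; the paper's own length-5 example makes the same elision.
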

\subsection{Attacker models}
\subsubsection{Attacker with some part of raw data}
Assume raw data are a collection of samples $\{\mathbf x_1, \mathbf x_2, ... , \mathbf x_n\}$(can be also represented by $X$) with dimension $d$. The transformation matrix $A\in\mathbb R^{d \times f}$ and the transformed data are $Y\in\mathbb R^{f\times n}$. and the attacker have some part of raw data $\{\mathbf x_1', \mathbf x_2' ...., \mathbf x_m'\}$ with dimension $d' < d$. In this case, the attacker's purpose is to join the table and get more attributes for his samples. The attacker have to guess the mapping from his sample to columns of $Y$, where there are ${n \choose m} m!$ possibilities. And within each possibility, according to \eqref{thm:linear}, the attacker still have no more knowledge about $n-m-1$ dimensions of the raw data. Since ${n \choose m} m!$ is already big enough, and usually $n - m - 1$ should be much larger than 1, so we can say linear transformation is resilient to this sort of attack.
As for random permutation, the possibilities increases exponentially with the data size. With large enough data size, the attacker can hardly gain any knowledge on the raw data.
\subsubsection{Attacker with knowledge of the transformation}
Although it's hard for attackers to know anything about the random transformation, since the data holder decides it, but it is still worth discussion. For random permutations, if the attacker knows it, the raw data is instantly revealed. In order to prevent this, local differential privacy can be added. And for random linear transformations $f: \mathbb R^m \rightarrow \mathbb R^n$, when $n < m$, even if the attacker knows $f$, he cannot find $f^{-1}$ since it's not a bijective function. One $f(x)$ corresponds to infinite possible $x$. Only if the attacker also know the distribution of sample space $\{x\}$ and the samples are actually lying on some subspace of $\mathbb R^{m'}$ and $m' < n$, could he have a chance to fully reconstruct the original data. 

\subsection{Extending Raw Data's Definition}
In the discussion above, we refer 'raw data' as the raw data value, i.e. a vector, a table. But some times, the numeric values are not the essence of the data. For example, a online store gathered milllions of people's purchase records. It uses different integers to refer different items and consumers, denoted by ID. For exmaple, a book is represented by 1, and a T-shirt is represented by 100. So the whole dataset can be a matrix where if consumer i bought item j, the entry (i, j) is 1, otherwise 0. Randomly swaps the ID of two items or consumers for multiple times, it's impossible to reconstruct the original matrix. So does it achieves a 'exponentially  small $\epsilon$'-reconstructive privacy? Of course not. Since the real information of the dataset is not the interaction matrix. It is the user-item graph. A graph can have exponentially large number of adjacent matrices. Although two matrices may look not like each other at all, but they can be the adjacent matrices of the same graph. An attcker can get the graph, and with some background knowledge, i.e. by examing the degrees of each user node and item node, he can guess which item node corresponding to which real item. Hence he can use those data for his own benefit, i.e. training a recommender system. So are the images. Since convolutional models do the same linear transformation on different parts of the image, thus, the relations of different parts of the image is reserved, and attackers can easily guess the content of the raw image.
The above two cases show that the raw data is not always the numeric values, but some structure lies inside the numbers. So in order to achieve privacy preserving, the raw data's definition must be carefully chosen by domain experts.
\section{Framework Design}
In the last section, I showed that the random transformations can preserve privacy by disabling adversaries to recover raw data from the transformed data. So it is safe for data holders to give out its raw data to some third party to perform computation and the get back the results. To take advantage of this, I designed a framework merging random transformations and MPC operations together, in order to perform private machine learning tasks more efficiently.
\subsection{Arithmetic Sharing}
To my knowledge, arithmetic sharing was first formally proposed on the ABY\cite{demmler2015aby} framework. It's based on shamir's secret sharing scheme\cite{shamir1979share}. In this paper, I use the 2-party setting for simplicity.

\textbf{Shared Value: }
A value $x$ is shared among parties $P_0, P_1$ means that $P_0$ holds a value $\langle x\rangle_0$ while $P_1$ holds a value $\langle x\rangle_1$ with the constraint $\langle x\rangle_0 +\langle x\rangle_1 = x$.

\textbf{Reconstruction: }
$P_0, P_1$ both send their value some party, could be one of them or a third-party. The raw value is reconstructed immediately by summing two values.

\textbf{Addition: }
When adding a public value $a$ to a shared value $x$, the two parties just add $a/2$ to their shares of $x$.
When adding a shared value $a$ to a shared value $x$, the two parties just add their shares of $a$ and $x$ respectively.

\textbf{Multiplication: }
When multiplying a public value $a$, the two parties just multiply their shares by $a$.
Multiplication with a shared value is kind of tricky. I adopted the beaver triple in this framework. Suppose multiply shared value $x$ with shared value $y$. This requires a precomputed triple $uv = w$. In the sense of matrix, the $u$ should have the same shape with $x$ and $v$ should have the same shape with $y$. So $xy = (x - u)(y - v) + (x -u)v + u(y - v) + uv$. And $x - u$ and $y - v$ can be public since $u$ and $v$ are shared private values, this formula can be evaluated in a shared manner. Both parties then shares the product $xy$.

\subsection{Adding Random Transformation to Arithmetic Sharing}
\subsubsection{Nonlinear Functions}
In traditional MPC systems, the most difficult and costly part is the computation of nonlinear functions, including neural network's activation functions and logical functions, i.e. comparisons. Existing works mostly uses garbled circuit or polynomial approximation to calculate them. These methods result in heavy computation and communication costs.

By adopting reconstructive privacy, this can be quite easy. My framework contains several semi-honest third parties who can perform computation. While $P_1, P_2$ contains a shared vector $x$, when they wants to compute $f(x)$, where $f$ is some element-wise nonlinear function, they first get a random permutation of $x$, denoted by $x'$. This can be achieved by sharing a random seed. Then they send the $x'$ to a third party $P_3$ who computes $f(x')$ and shares it to $P_1$ and $P_2$. Permuting it back, $P_1$ and $P_2$ then get the shares of $f(x)$. 

\textbf{Element-wise Functions: }
1. Each party calculate $\langle x_i'\rangle = P(\langle x_i\rangle)$ where $P$ is a random permutation. Two parties can share a secret or sync a random seed in order to produce the same permutation.
2. Then they reconstruct value $x$ to a third party $P_3$. $P_3$ computes $f(x)$ then share it to $P_1$ and $P_2$. $P_1$ and $P_2$ then reconstruct the shared value using the inverse permutation.

\subsubsection{Distribute Works to Third Party}
After appropriate random transformations, the data can be securely revealed to third party, then the third party can fit a model. But only fitting the model on transformed data, the performance will certainly dropped since the data are transformed and may lose some information. But this can be overcome by 'fitting' the transformation. A simple example is that two data holders with vertically partitioned data, then they can just using a local neural network to produce a hidden layer output with same shape. Then the third-party added their outputs and perform further training. In the backward phase, after the third-party updated its own parameters, it sends back the gradients on the hidden layer output to two data holders. Using the chain rule, the data holders are able to calculate their gradients and then update their parameters. In the third-party's perspective, since he knows nothing about the data holders' local networks, the local networks can be considered as random transformations. And with the gradients sent back, the random transformations are actually learning themselves. So the whole 'shared' model can achieve same performance just like a local model.

\subsection{Put it All Together}
Combining arithmetic sharing and random transformations together, the framework mainly provides two functionalities:
\begin{enumerate}
    \item \textbf{Addition and multiplication based on secret sharing: } Using additive secret sharing, it's simple to implement linear operations including addition, subtraction, and multiplication including element-wise multiplication and matrix multiplication.
    \item \textbf{Element-wise non-linear functions: } Using random permutation, the computation of non-linear element-wise functions can be executed on semi-honest third parties.
    \item \textbf{Computation after transformation: } For deep neural networks, the first layer's output computation can be executed in the secure way. Then output is revealed to third party to perform further computation. In the backward propagation phase, the third party passes gradients to the parties engaged in the first layer's computation, probably in a shared manner. 
\end{enumerate}

And the actual implementation depends on different tasks. For example, logistic regression in federated learning setting, the data holders first share their data on two semi-honest servers, with other computation service providers as helpers of matrix multiplication and performing element-wise function calculation. For deep convolutional networks, first a few layers can be performed in a secure way. After that, the output can be considered as 'random transformed', so a third party with strong computation power can do afterwards computation.
\section{Experiments}
\subsection{Implementation}
To realize the framework, I uses tensorflow 2.x as the backend to perform the computations, and all computations is performed in the eager mode. I uses the GRPC library for making RPC calls across different parties. 
As for random permutation generation and inversion, I uses numpy's random generator. I creates a party to deliver all rpc calls according to the protocol, named the coordinator. When a computation needs to be performed on a party, i.e. loading a data file, doing addition, subtraction or matrix multiplication, the coordinator will generate a string representing the expression. The receiver party parses the string and do computation according to it. When the computation is finished, the receiver party saves the result tensor in its container, and then returns a unique key to the coordinator. For coordinator, the key is representing a 'remote tensor'. When one party needs the value of some other party's tensor, it also makes a rpc call. The tensor is serialized by first converting to numpy array and then uses pickle. Parallel rpc calls is made wherever it is possible. The computation can be composition of basic computations in order to reduce number of rpc calls.
\subsection{Dataset and System Settings}
I used the MNIST dataset for experiments. The MNIST dataset contains 55000 images of handwriten digits from number 0-9, equally numbered. Each image is of 8-bits gray scale and size 28 × 28. The label is a one hot vector of length 10 indicating the image belong to which number. I used 50000 images for training and 5000 for validating. During training and validating, the image pixel values are is scaled to [0,1) by multiply 1/256. 
The experiment is executed on a cloud server which has 16 processor cores of frequency 2.5GHz and 64GB memory, and a Tesla T4 GPU. All the parties are simulated by individual python processes. 
\subsection{Logistic Regression}
I tested the framework with logistic regression on the MNIST dataset. In order to use element-wise functions, I used sigmoid function as the activation function instead of softmax, since softmax is not element-wise. So the model is:
\begin{equation}
    \mathbf y = sigmoid(\mathbf x W + \mathbf b) \text{ where } sigmoid(z) = \dfrac{1}{1 + e^{-z}}
\end{equation}
$z = \mathbf x W + \mathbf b$ is securely calculated by arithmetic sharing and the sigmoid function is calculated with random permutation by a third-party different from the party provides beaver triplets for shared multiplication.

In the experiment, the batch size is set to 32 and the learning rate is set to 0.1. Mean squared loss and SGD optimization is used. I recorded the elapsed time and accuracy on the validation set every 100 batches. Total train batches is 10000.

\subsection{Neural Network}
I also tested neural network with one hidden layer of size 64. Considering the label alone does not reveal any useful information, I let first layer's output computed by arithmetic sharing and then feed it to a neural network. The network only gets the linear-transformed output. According to reconstructive privacy notion, it does not reveal any useful information. And the loss function is cross entropy. The other settings are the same as the logistic regression experiment.

\begin{figure}[h!]
    \centerline{
        \includegraphics[width=80mm]{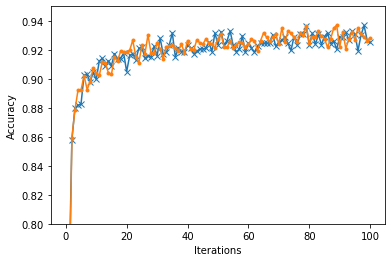}
        }
    \caption{Training curve of logistic regression. Orange: training locally; Blue: training by the framework.}
    \label{fig:train_log}
\end{figure}
\begin{figure}[h!]
    \centerline{
        \includegraphics[width=80mm]{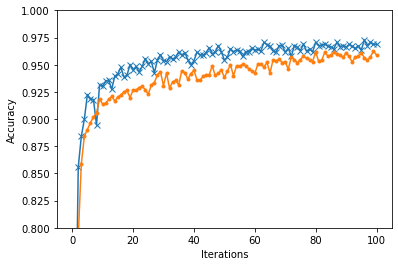}
        }
    \caption{Training curve of DNN. Orange: training locally; Blue: training by the framework.}
    \label{fig:train_nn}
\end{figure}

\subsection{Conclusion}
Figure \ref{fig:train_log} shows that training curves of local training and training by the framework of logistic regression. And figure \ref{fig:train_nn} shows the curve of DNN. Using my framework and training locally, the curves are almost the same. It's not surprise since my framework has almost zero precision loss.
However, as table \ref{table:eta} shows, the training time is still a lot longer than training locally. It takes 30x-100x more time than training locally. But compared methods based on homomorphic encryption and MPC, it's still fast enough. The inference time is only 0.06s for one image, while SecureML\cite{mohassel2017secureml} taking 4.88s and CryptoNets\cite{bachrach2016cryptonets} taking 297.5s for inferencing one image.

\begin{table}[htbp]
    \caption{Training time}
    \begin{center}
    \begin{tabular}{|c|c|c|c|c|}
    \hline
      &\multicolumn{2}{|c|}{\textbf{Framework}}&\multicolumn{2}{|c|}{\textbf{Local}} \\
    \cline{2-5} 
    \textbf{Model} & Logistic & DNN & Logistic & DNN \\
    \hline
    \textbf{Training time(s)} & 1981 & 1179 & 22 & 63 \\
    \hline
    \end{tabular}
    \label{table:eta}
    \end{center}
\end{table}
\section{Conclusions}
This paper proposed a new privacy notion called reconstructive privacy. Unlike differential privacy, this privacy notion focuses on the probability of reconstructing useful information from the transformed data. Based on this, random transformations can be applied for private machine learning. After the data transformed, third party computation servicers can perform computations on the transformed data. Comparing with methods based on homomorphic encryptions or garbled circuits, this method hugely reduces the computation costs. But this method also need strong assumptions. The attackers is assumed to know nothing about the distribution of the raw data or the random transformations. If the raw data are sparse, i.e. words, rating histories or in graph form, i.e. social relations, the reconstructive privacy is hard to compute. The computation cost of reconstructing the raw data should also be considered.

\bibliographystyle{IEEEtran}
\bibliography{main}

\end{document}